\theoremstyle{plain}
\newtheorem{thm}{Theorem}
\newtheorem{lem}[thm]{Lemma}
\newtheorem*{claim}{Claim}
\theoremstyle{definition}
\newtheorem{exm}[thm]{Example}
\renewcommand{\leq}{\leqslant}
\newcommand{\rto}{\rightarrow}
\newcommand{\Cl}{\operatorname{Cl}}
\newcommand{\Ji}{\operatorname{Ji}}
\newcommand{\Mi}{\operatorname{Mi}}
\newcommand{\ua}{\uparrow}
\newcommand{\da}{\downarrow}
\newcommand{\uda}{\updownarrow}
\begin{document}

\title[Discovery of the $D$-basis]{Discovery of the D-basis in binary tables based on hypergraph dualization}
\author {K. Adaricheva}
\address{Department of Mathematics, School of Science and Technology, Nazarbayev University, 53 Kabanbay Batyr ave., Astana 010000, Republic of Kazakhstan}
\email{kira.adaricheva@nu.edu.kz}
\address{Department of Mathematical Sciences, Yeshiva University,
245 Lexington ave.,
New York, NY 10016, USA}
\email{adariche@yu.edu}

\author {J. B. Nation}
\address{Department of Mathematics, University of Hawaii, Honolulu, HI
96822, USA}
\email{jb@math.hawaii.edu}

\thanks{The research was partially supported by Grant N 13/42 of Nazarbayev University, 2013--2105}
\keywords{Binary table, Galois lattice, implicational basis, hypergraph dualization, association rules}
\subjclass[2010]{05A05, 06B99, 52B05}

\begin{abstract}
Discovery of (strong) association rules, or implications, is an important task in data management, and it finds application in artificial intelligence, data mining and the semantic web. We introduce a novel approach for the discovery of a specific set of implications, called the $D$-basis,  that provides a representation for a reduced binary table, based on the structure of its Galois lattice. At the core of the method are the $D$-relation defined in the lattice theory framework, and the hypergraph dualization algorithm that allows us to effectively produce the set of transversals for a given Sperner hypergraph. The latter algorithm, first developed by specialists from Rutgers Center for Operations Research, has already found numerous applications in solving optimization problems in data base theory, artificial intelligence and game theory. One application of the method is for analysis of gene expression data related to a particular phenotypic variable, and some initial testing is done for the data provided by the University of Hawaii Cancer Center. 
\end{abstract}

\maketitle

\section{Introduction}

Knowledge retrieval from large data sets remains an essential problem in information technology and its many usages in finance, biology, economy and social sciences. The data is often recorded in binary tables with rows consisting of the objects and columns of the attributes, that mark whether a particular object has or does not have a particular attribute. The dependencies existing between the subsets of the attributes in the form of association rules, or implications, can uncover the laws, causalities and trends hidden in the data; see R. Agrawal et al.~\cite{ASMTV96}.

	An implication $X\rightarrow Y$ (also referred to as a strong association rule in data mining, see M. Kryszkiewicz \cite{K02}, with parameter of confidence equal to 1) has the meaning that every object in the data set that possesses each attribute in the subset X will also possess every attribute from the set $Y$. Such implications provide an essential hidden connection between different attributes. Sets of implications which are called bases provide an alternative way of storing the tabled data: they generate all possible implications that hold in the set of attributes, as their logical consequences, and also allow the restoration of the tabled data. Sometimes, only the implications $X\rightarrow b$ pertinent to a distinguished attribute $b$ could be of particular importance in various experiments that produce tabled data. 

One particular type of a table could be medical data where  $b$
may stand for a phenotypic attribute, while other attributes are the expression levels of specific genes. Then each association rule may represent a specific hypothesis of how the expression of several genes directly impacts the expression of another gene, and via a chain of direct interactions, indirectly impacts the phenotypic variation. Biologists may then pick some of these hypotheses to verify the interactions with the inventory of biochemical methods or animal models. To discriminate between the rules that are biologically relevant and those that are not, the concept of the support of the rule $X\rightarrow b$ may be applied, which is a portion of the rows of the table where all attributes from X are present. The rules with the highest support could be the targets of prime interest.
	
In this paper we propose a novel algorithm for the retrieval of the strong association rules between the sets of attributes in large binary tables, which is particularly suited for retrieval of targeted implications  $X\rightarrow b$ with a fixed attribute $b$. At the heart of the algorithm lies the connection between an arbitrary binary table and an associated algebraic structure known as a Galois lattice. A generating set of strong association rules, called a basis, often gives a more concise representation of the tabled data and its Galois lattice. Nevertheless, the existing algorithms developed in the framework of Formal Concept Analysis (FCA) are time-exponential in the size of the table, and they become ineffective on large data sets.

Our approach retrieves a complete generating set of implications known as the $D$-basis, that was introduced and tested in K. Adaricheva et al.~\cite{ANR11}. This new type of basis is amenable to parallel processing, and can be used to find the rules that target a particular attribute without constructing the entire Galois lattice.
Significantly, it employs the powerful optimization algorithm known as the hypergraph dualization, see M. Fredman and L. Khachiyan \cite{FK96}. Similar arrpoach was used in U.~Ryssel, F.~Distel and D.~Borchmann \cite{RDB11} for the retreival of the canonical direct unit basis, of which the $D$-basis is a subset. The code implementations of the dualization algorithm were successfully tested on the models of the large size in E. Boros et al.~\cite{BEGK02}. While one can employ the algorithm for arbitrary tables, it could especially be effective in tables with the number of attributes considerably larger than the number of the objects. One finds this type of data in biological studies, where the set of attributes may include millions of genetic variations.

The paper is organized as follows. We present the association rules and parameters of support and the confidence, as they are known in data mining, in section \ref{Asso-rul}. Then we give a general overview for the Galois lattices and FCA methods in section \ref{FCA} and the hypergraph dualization algorithm in section 
\ref{Hyper}. Section \ref{Dbas} describes the $D$-basis, and section \ref{Galois} gives the general properties of Galois lattices. Our main result is in section \ref{D-bas-table}: an algorithm to recover the $D$-basis from a binary table. We demonstrate the algorithm on an example in section \ref{Exm}, provide some preliminary results of testing the code implementation in section \ref{test}, and give an overview of the future work in section \ref{future}.

\section{Association rules in data mining}\label{Asso-rul}

The leading algorithmic approach in recovery of association rules in data mining was formulated in the form of the \emph{Apriori} algorithm in R. Agrawal et al.~\cite{ASMTV96}. Two parameters are essential for association rules: the thresholds of the support and the confidence.  The support of any set $X$ of attributes is the value $S_A (X)$ (formally defined in section \ref{Galois}), which is essentially the number of rows of the table that have ones in all columns from  $X$. The set  $X$ is called $\delta$-frequent, if the ratio of $S_A (X)$ to the number of all rows in the table is greater than the given value of the threshold $\delta$.
The first part of the \emph{Apriori} algorithm aims at discovery of the maximal $\delta$-frequent sets, for a given threshold value $\delta$.

The second part of the algorithm is devoted to splitting any maximal frequent set $X$ into $Y\cup Z$ so that the confidence $S_A (Y\cup Z) / S_A (Y)$ exceeds a given threshold $\mu$. In this case, $Y\rightarrow Z$ is an association rule that satisfies both thresholds $\delta$ and $\mu$.
 
The time required to build maximal frequent sets, in the bottom-up way as in \emph{Apriori}, is asymptotically proportional to the number of frequent sets, and the latter may be exponentially larger than the number of maximal frequent sets. The number of maximal frequent sets itself may also be exponential in the size of the attributes of the table. The time required to build maximal frequent sets can be used as a norm to evaluate the complexity relative to the size of both input and output, in other words,  the time delay in computation of the next maximal frequent set.

According to a result of E. Boros et al.~\cite{BGKM02}, given any subset  $S$ of maximal frequent sets, the problem of deciding whether it is complete (i.e., includes all maximal sets) is NP-hard, even if the number of maximal frequent sets is exponential in the number of attributes  $n$, and $S$ is of size $\mathcal{O}(n^\alpha)$ for small $\alpha$. 

All this makes the \emph{Apriori} heuristic inadequate for large data sets, in particular with many frequent sets of large size.
Another weakness of the \emph{Apriori} approach is its inability to target specific attributes in the conclusion of association rules. Due to the nature of the algorithm, the decision about the splitting of a maximal frequent set into antecedent and conclusion happens in the second part of the algorithm, after the main effort on obtaining the frequent sets is already spent. See further details in section \ref{test}.

\section{Formal Concept Analysis and basis retrieval}\label{FCA}

\subsection{Retrieval of the canonical basis}\label{can-bas}
The path from the tabled data to frequent sets of attributes, and then to bases of implications, can be built via the Galois relation existing between objects and attributes, and via the structure associated with this relation known in the algebraic literature as the Galois lattice. For example, maximal frequent subsets of the attribute set, which are targeted in data mining of transaction tables, are particular elements of the Galois lattice, see P. Valtchev et al.~\cite{VMG08}, while the strong association rules (with the confidence = 1) form the implicational basis of this lattice.

The first study of the Galois lattice appeared in 1940 in G. Birkhoff \cite{B40}. It was developed further in M. Barbut and B. Monjardet \cite{BM40}, and taken into a field of its own, under the name Formal Concept Analysis (or FCA), in B. Ganter and R. Wille \cite{GW99}. FCA took off as a technical tool since the 1980s, mostly for the data visualization for various business applications. The applications range from knowledge representation and data mining to knowledge management and the semantic web, see J. Poelmans et al.~\cite{PEVD10}. 

Most recently, there were several projects in European Commonwealth where FCA was involved in biological studies of temporal Boolean networks modeling genetic data. See, for example, \cite{AP05,D01,WHPKGKG09,WGG08}. There are growing applications of Galois lattices in ontology mappings \cite{YSH08,ZS02,YXH06} and in description logics \cite{S08}. A generalization of Galois lattices, known as ``pattern structures" in B. Ganter and S. Kuznetsov \cite{GK01}, deals with data more complex than binary tables. Some applications of these techniques can be found in C. Carpineto and G. Romano \cite{CR04} and M. Kaytoue  et al.~\cite{KKND11}.

The FCA approach is centered around a special implicational basis known in the literature as the canonical basis (also stem basis, or Duquenne-Guigues basis). This basis has the minimum number of implications, and has a fundamental connection to any other basis defining a given Galois lattice and/or binary table.

The attribute exploration algorithm  to recover the canonical basis, developed in the FCA framework, see B. Ganter \cite{G99}, requires a search over the entire Galois lattice, with respect to some linear order established on the power set of attributes. As a result, the algorithm runs in times dependent on at least the size of the Galois lattice, which is normally exponential in the number of attributes. Recent theoretical results suggest that existing approaches to calculation of the canonical basis may not produce algorithms with better worst-case complexity, see \cite{BK10,DS11}.

At its core, the implications of the canonical basis are defined by recognizing pseudo-closed sets $X$ that are placed into the premises of implications: $X\rightarrow Y$. There is no possibility to use parallel computation of pseudo-closed sets, since every new one depends on its subsets being recognized earlier in the process.\footnote{We mention that some efforts were spent on parallelization of algorithms building a Galois lattice from the binary table, such as in P.~Krajca et al.~\cite{KOV12}, but possible utilization of these approaches for computation of pseudo-closed sets has not yet been developed.}   Moreover, it may happen that the same attribute $b$ will appear on the right side in several implications from the basis. Thus, even if we were interested only in implications of the form $X\rightarrow b$, for this particular $b$, one would need to reconstruct the whole basis.

\subsection{Canonical direct unit basis}\label{Can-unit}

The canonical direct unit basis is discussed in K. Bertet and B. Monjardet \cite{BM10} as a basis unifying various definitions given to this concept in the literature. As the Duquenne-Guigues basis, it can be defined in the abstract framework of a general closure system on a finite set. It consists of implications  $X\rightarrow b$ with the property of minimality of $X$ with respect to set containment, for any fixed $b$.  In other words, no implication  $X'\rightarrow b$ holds, where $X'$ is a proper subset of $X$.

The canonical direct unit basis has considerably more implications compared to the Duquenne-Guigues basis, but it has a nice feature of being iteration-free basis, or direct basis. In fact, it is contained in any other direct basis defining the same closure system. Section 11 of \cite{ANR11} contains experimental results on the time needed to compute the closure of a randomly chosen set, in the closure systems on 6 and 7-element sets, given three bases: canonical basis of Duquenne-Guigues, in its unit form, the canonical direct unit basis and the $D$-basis.
For the closure systems in these computational experiments,
we found that the length of the Duquenne-Guigues basis was on
the order of half the length of the canonical direct basis.
On the other hand, not being direct or ordered direct, it
took on the order of twice as long to compute the closure
as the other two.


Recently, U. Russel et al.~\cite{RDB11} proposed a method of retrieval of the canonical direct unit basis that would employ the hypergraph dualization algorithm. While the actual coding applied some back-tracking and simple heuristics instead of actual implementation of dualization, the computer test results showed considerably shorter running times when compared to existing FCA implementations: on a particular data table of size $26\times 79$ the new algorithm would return the basis of 86 implications in 0.1 sec compared to 6.5 hours of the FCA implementation, and, on random tables of size $20\times 40$, it spent about 1/50 of time needed for the FCA algorithm.

We note that the $D$-basis which we deal with in this paper is a subset of the canonical direct unit basis, see more details in section \ref{Dbas} and in \cite{ANR11}.

\section{Hypergraph dualization}\label{Hyper}

Let $A$ be a finite set of cardinality $|A|=n$. For a hypergraph (set family) $H\subseteq 2^A$, consider the family of its maximal independent subsets, i.e., maximal subsets of $A$ not containing any edge of $H$. The complement of a maximal independent subset is a minimal transversal of $H$, i.e., a minimal subset of $A$ intersecting all edges of $H$.
(Minimal transversals are also called minimal hitting sets.)

The collection $H^d$ of minimal \emph{transversals} is called the dual or transversal hypergraph of $H$. It is easy to see that $H^d$ is a Sperner hypergraph, i.e., no edge of $H^d$ contains another edge of $H^d$. If $H$ is also Sperner then $H=(H^d)^d$. Given a Sperner hypergraph, a frequently arising task is the generation of the transversal hypergraph $H^d$. This problem,
known as dualization, can be stated as follows:\\

$\operatorname{DUAL}(H,G)$::Given a complete list of all edges of $H$, and a set of minimal transversals $G\subseteq  H^d$, either prove that $G=H^d$, or find a new transversal $g\in H^d\setminus G$.

Clearly, one can generate all of the minimal transversals in $H^d$ (equivalently, all the maximal independent sets for $H$) by
initializing $G=\emptyset$  and iteratively solving the above
problem $|H^d |+1$ times. Since $|H^d |$ can be exponentially large in both $|H|$ and $|A|$, the complexity of generating $H^d$ is customarily measured in the input and output sizes. 

According to a result in \cite{EG95}, the problem $\operatorname{DUAL}(H,G)$ can be solved in incremental quasi-polynomial time, i.e., in $\mathcal{O}(n)+m^{(o(\log⁡ (m) )}$ time, where $n=|A|$ and $m=|H|+|G|$.
Moreover, $H^d$ can be generated in incremental polynomial time (i.e. $\operatorname{DUAL}( H,G)$ can be solved in time polynomial in
$|A|, |H|$, and $|G|$) for many classes of hypergraphs, see \cite{BEGK02,BGH98,EG95}.

 The hypergraph dualization algorithm is one of various optimizations in database theory, artificial intelligence, game theory, and learning theory, to name a few. See the survey articles \cite{EG95} and \cite{BEGK02}, also the most complete recent account in \cite{H08}.  Its efficient implementation is described and tested in L. Khachiyan et al.~\cite{KBEG06}. While it was shown that the implementation achieves the same theoretical worst bound, practical experience with this implementation shows that it can be substantially faster. In particular, the code can produce, in a few hours, millions of transversals for hypergraphs with hundreds of vertices and thousands of hyper-edges. Furthermore, the experiments also indicate that the delay per transversal scales almost linearly with the number of vertices and number of hyper-edges. A more recent implementation by K. Murakami and T. Uno \cite{MU13}  not only demonstrates even better time performance, but also runs fast on large-scale inputs, for which earlier algorithms do not terminate in practical time.

\section{$D$-basis}\label{Dbas}

The idea of the $D$-basis comes from concept of the $D$-relation developed in a lattice theoretic framework. The definition of this relation goes back to the work of B. J\'onsson, A. Day, R. Freese, and J.B. Nation, which showed that this relation played a critical role in the description of the structure of free lattices. See the monograph R. Freese et al.~\cite{FJN95}. The $D$-relation plays a key role in defining the OD-graph of a finite lattice in J.B. Nation \cite{N90}, which was widely used in computer science literature. This concept was translated into the $D$-basis in the recent work \cite{ANR11}.

Essentially, the $D$-basis is a subset of the canonical direct unit basis that consists of implications $x\rto b$ (binary part), as well as  $X\rightarrow b$  with $|X|>1$,  such that whenever 
any $x\in X$  is replaced by any set $Y$ for which $x\rightarrow y$ holds for all $y \in Y$, and $Y\rightarrow x$ does not hold, the implication no longer holds. In particular, when $Y=\emptyset$, this also means that the implication $X'\rto b$ fails for every proper subset $X'\subset X$.

For each implication $X \rto b$ in the $D$-basis, the subset $X$ is called a \emph{minimal cover} for $b$, in lattice theory framework, and the $D$-relation is the binary relation on the base set of a closure system defined as follows: $bDx$ if $x \in X$, for some minimal cover $X$ for $b$. 

As we will see in section \ref{Galois}, the closure system of interest for us will be defined on the base set $A'$ of attributes of a (reduced, per discussion in section \ref{reduced}) binary table, which will also become the set of join irreducible elements of the  Galois lattice $L$. We will revisit the connection between the $D$-relation and the $D$-basis in section \ref{connect}.  

 The OD-graph of a finite closure system or a finite lattice is defined as a collection of all minimal covers, if any, for join irreducible elements of the lattice, together with the partially ordered set of all join irreducibles in the lattice. The latter is reflected in the \emph{binary} part of the $D$-basis, or any other basis for the lattice.

\begin{exm}
\end{exm}
Consider the lattice $L$ on Fig.1 in section \ref{Exm}. The set of join irreducible elements is $\Ji L = \{a_1,a_2,c_1,c_2,b\}$. The OD-graph of the lattice contains $\langle \Ji L, \leq\rangle$, where $\leq$ is inherited from the lattice, i.e., the non-trivial relations are $c_1 \leq a_1,b$ and $c_2 \leq a_2,b$. Thus, implications $b\rto c_1$, $a_1 \rto c_1$ and $a_2 \rto c_2$,$b\rto c_2$ should be either included into or follow from any set of implications defining closure system represented by $L$. 

OD-graph will also contain the minimal covers, as pairs $(X,y)$, where $X\subseteq \Ji L$, $y \in \Ji L$ and $X$ is a minimal cover for $y$. Not that $\{a_1,a_2\}$ is not a minimal cover for $b$, since $a_1\rto c_1$ (while $c_1\rto a_1$ does not hold) and $a_1$ can be replaced by $c_1$ so that $c_1a_2 \rto b$ holds . Thus, the OD-graph will have only two minimal covers for $b$: $\{a_1,c_2\}, \{a_2,c_1\}$, and one minimal cover for each of $a_1,a_2$: $\{a_2,c_1\}$ and $\{a_1,c_2\}$, respectively.

Finally, the $D$-basis is the implicational form of the OD-graph and consists of implications: $a_1 \rto c_1$, $b\rto c_1$, $a_2 \rto c_2$, $b\rto c_2$, $a_1c_2 \rto b$, $a_1c_2 \rto a_2$, $a_2c_1 \rto b$ and $a_2c_1 \rto a_1$. 

The canonical direct unit basis will have three extra implications: $a_1a_2 \rto b$, $ba_1\rto a_2$ and $ba_2\rto a_1$, which are not included in the $D$-basis.\\

An important feature of the $D$-basis is that it is \emph{ordered direct}, which means that it is iteration-free when a special ordering is imposed on the implications.  For the $D$-basis, the ordering only requires that all binary implications $x \rto b$ precede all non-binary implications $X \rto c$.
This property of the basis has an advantage of easy parsing for the processing of its logical consequences. Such processing shows faster times than the forward chaining algorithm, which is widely used in industrial logic programming, or LINCLOSURE algorithm in data bases, see \cite{ANR11}, section 7. Thus it retains the important property of \emph{directness} of the canonical direct unit basis, while having, on average, only a portion of implications from the latter. According to the test results in \cite{ANR11}, the average number of implications in the $D$-basis, for closure systems on a domain of size 7, are between 0.55 and 0.70 of the size of the canonical unit basis.

\section{Galois lattice}\label{Galois}
\subsection{Support function and concepts}

Many data sets in computer science are presented in the form of binary tables. By a finite binary table we understand a relation $R\subseteq U\times A$, for the set of objects $U$ (rows of the table) and set of attributes $A$ (columns of the table). If $r=(u,a)\in R$, then the position in row $u$  and column $a$  is marked by $1$. This can be interpreted as meaning that object $u$ possesses attribute $a$. Otherwise, the position is marked  with a $0$.

This table represents a Galois connection, and allows us to form the corresponding Galois lattice. In order to define the Galois lattice, one needs to define the support function on each of the sets $2^U$ and $2^A$ with respect to $R$.
 
$S_A: 2^A\rightarrow  2^U$ is called a support function on $2^A$ if, for every  $X\subseteq A$, $S_A (X)=\{y\in U:(x,y)\in R, \text{ for all } x\in X\}$. Similarly, the support function $S_U: 2^U\rightarrow  2^A$ is defined for all  $Y\subseteq U$ as $S_U (Y)=\{x\in A:(x,y)\in R,\text{ for all } y\in Y\}$. One may use the symbol $S$ for notation of both $S_A$ and $S_U$, since it is usually clear from the context which one should be applied.

The pair $(X,Y) \in 2^A\times 2^U$ is called a \emph{concept} of the relation $R$, if $Y=S(X)$ and $X=S(Y)$. It is easy to show that if $(X_1,Y_1)$  and $(X_2 ,Y_2)$ are two concepts with $X_1 \subseteq X_2$, then $Y_2\subseteq Y_1$. 	Thus, the set of all concepts can be ordered with respect to the set containment order on their first components, or with respect to containment order of their second component, producing two partially ordered sets that are dual to each other. 

Moreover, each of these partially ordered sets actually forms a lattice, $L_R$, or respectively, $L_R^*$, and one may refer to one or the other (depending on the preferences between sets $U$ or $A$) as the Galois lattice (\emph{concept lattice} in FCA) of the relation $R$.

Since the structure of the Galois lattice $L_R$ is fully determined by its first component (or its dual lattice $L_R^*$ is fully determined by its second component), one can establish an isomorphism between the Galois lattice $L_R$ and the lattice of closed sets of a closure operator defined on $2^A$ by means of the support function. Indeed, it is straightforward to show that the operator $\phi_A: 2^A\rightarrow 2^A$ defined as $\phi_A (X)=S_U (S_A (X))$ for $X\in 2^A$ is, in fact, a closure operator on $A$, and the closed sets with respect to $\phi_A$ are exactly the first 
components of the concepts of $R$. Thus, $\Cl(A,\phi_A)=L_R$, where $\Cl(A,\phi_A)$ denotes the lattice of closed sets of a closure system $(A,\phi_A)$.

\subsection{Reductions of the table}\label{reduced}

There is a well-developed procedure for reducing the given relation R to a relation $R'\subseteq U'\times A'$, where $U'\subseteq U$, $A'\subseteq A$ and $R'=R|_{U'\times A'}$, so that $|U'|$ and $|A'|$ are minimal with respect to property $L_R'\simeq L_R$. See, for example, section 2 of \cite{ANR11}, for the general theoretical outline, and section XI.3 in \cite{FJN95} for algorithmic details. In other words, one may leave only essential objects and attributes in the relation and remove the others. Any analysis can be done on the smaller table, while easily extending the outcome to the original sets of objects and attributes. In fact, every removed element $a\in A\setminus A'$ will be associated with some $X_a\subseteq A'$ such that $\phi_{A' } (a)=\phi_{A' } (X_a)$.
	
When the procedure of reduction is completed and one obtains the reduced table $R'$, one can establish an important relationship between the elements of the set of objects $U'$, the set of attributes $A'$, and the special subset of elements in the Galois lattice $L_R'$. Namely, the set $A'$ can be interpreted as the set of join irreducible elements  $\Ji L_R'$ of the Galois lattice $L_R'$, and $U'$ as the set of meet irreducible elements $\Mi L_R'$.\footnote{It is a matter of taste which of the two Galois lattices, dual to each other, one chooses. For that matter, the set $A'$ may be associated with the set of meet irreducibles rather than join irreducibles, which often happens in publications of FCA.} Moreover, an element $1$ at the intersection of row $i$ and column $j$ is equivalent to $j\leq i$ in the lattice $L_R'$. 

\subsection{Arrow relations in the table}\label{arr}

We may assume that after the reduction the table has rows of $n$ objects and columns of $m$ attributes. 
The binary table allows one to quickly recover additional information on $L_R'$:

\begin{itemize}
\item [(1)] Establishing a partial order $(U',\leq )=(\Mi L_R' ,\leq )$;
\item [(2)] Establishing a partial order $(A',\leq )  =(\Ji L_R' ,\leq )$;
\item [(3)] Establishing arrow relations $\uparrow$, $\downarrow$, and $\updownarrow$.
\end{itemize}
	Recall that for $i\in \Ji L_R'$ and  $j\in \Mi L_R'$, $i\uparrow j$ is defined to hold iff  $j$  is a maximal element among elements of $L_R'$ that are not greater than $i$. (In lattice terms, it is equivalent to: $i\vee j=j^*$, where $j^*$ is the unique upper cover of $j$).

Dually, $i\downarrow j$  iff  $i$  is a minimal element among elements of $L_R'$ that are not less than $j$. (This is equivalent to: $i\wedge j=i_*$, where $i_*$ is the unique lower cover of $i$). Finally, $i\updownarrow j$, if both $i\uparrow j$ and $i\downarrow j$ hold.

It is clear that algorithmically, the reconstruction of arrow relations is equivalent to finding maximal or minimal elements in a particular partially ordered set, which are sub-posets of either $(U',\leq )$ or $(A',\leq)$. So there is a straightforward process to recover these relations. Thus, we may assume that the reduced table of lattice $L_R'$ is given equipped with the arrow relations. 

\subsection{Implicational basis}

The Galois lattice can be fully determined by a set of implications defined on the set $A'$, or dually, on the set $U'$. 
By definition, an \emph{implication} on the set $A'$ is an ordered pair $(X,Y)\in 2^{A'}\times 2^{A'}$ with $X,Y \neq \emptyset$. Very often the implication $(X,Y)$ is written in the form $X\rightarrow Y$.  A subset $Z\subseteq A'$ respects  an implication $X\rightarrow Y$, if whenever $X\subseteq Z$ one also has $Y\subseteq Z$. If $\Sigma$ is a set of implications, then  Z  respects $\Sigma$ whenever Z respects every implication $\sigma$ from $\Sigma$.

There exists a classical connection between closure operators defined on $A'$ and sets of implications on $A'$:
\begin{itemize}
\item[(1)] every set of implications $\Sigma$ defines a closure operator $\phi_\Sigma$ by setting $\phi_\Sigma (Y)$ as the smallest overset of $Y$ that respects $\Sigma$;

\item[(2)] every closure operator $\phi$ can be defined by some set of implications $\Sigma$ such that the $\phi$-closed sets are exactly sets that respect $\Sigma$.
\end{itemize}

While every set of implications defines the closure operator uniquely, there are multiple possibilities to define a set of implications for any given operator.  A set of implications for a given closure operator, satisfying some conditions of minimality, is usually called an implicational basis of this closure system. One can refer to the survey article \cite{BGKM02} for further details on connections between closure operators, their closure lattices, and sets of implications.
	
For application purposes, most often the implications 
on $A'$ provide an essential hidden connection between different attributes in the given data set. For example, an implication $a_1 a_2\rightarrow b$, where $a_1 ,a_2,b\in A'$, means that every object in the data set that possesses both attributes $a_1$ and $a_2$ also possesses attribute $b$.

\subsection{Connection between $D$-basis and $D$-relation}\label{connect}

As we mentioned earlier, the name of the $D$-basis is directly connected to the $D$-relation defined in a lattice theory framework. Lemma 2.31 in \cite{FJN95} can be formulated as follows.
\begin{lem} \label{lemma2}
Given two elements $b,c\in \Ji L$, the relation $bDc$ holds iff there exists an implication  $X\rightarrow b$ in the $D$-basis of $L$ such that $c\in X$.
\end{lem}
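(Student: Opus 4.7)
The plan is to unfold the definitions given in section \ref{Dbas}. The statement is essentially a direct translation between the lattice-theoretic $D$-relation used in \cite{FJN95} and the implicational $D$-basis described in this paper, so the proof reduces to verifying that the notion of ``minimal cover'' is the same on both sides, and then the equivalence becomes tautological.

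First I would handle the forward direction. Assuming $bDc$, the paper's definition of the $D$-relation supplies a minimal cover $X$ for $b$ with $c \in X$. By the defining property of the $D$-basis (namely, that its non-binary implications are exactly those $X \rto b$ whose antecedent is a minimal cover for $b$, and whose binary part consists of the singleton covers $\{x\} \rto b$ when $b \leq x$), the implication $X \rto b$ lies in the $D$-basis, and $c \in X$ as required.

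For the converse, I would take $X \rto b$ in the $D$-basis with $c \in X$. Directly from the definition of the $D$-basis, $X$ is a minimal cover for $b$, and the definition of the $D$-relation then yields $bDc$.

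The only substantive point — and the place where I expect the main obstacle to lie — is verifying that the paper's definition of minimal cover, phrased in the implicational language (the implication fails whenever any $x\in X$ is replaced by a set $Y$ with $x\rto y$ for all $y\in Y$ and $Y\rto x$ failing), coincides with the lattice-theoretic definition of minimal nontrivial join cover from Lemma~2.31 of \cite{FJN95}. The key translation is that ``$x\rto y$ for all $y\in Y$ and $Y\rto x$ does not hold'' says exactly that $Y$ lies weakly below $x$ and $\bigvee Y < x$ in $L$. Under this dictionary, the case $Y = \emptyset$ recovers the irredundancy condition $b \not\leq \bigvee(X\setminus\{x\})$, while nontrivial choices of $Y$ rule out refinements of $X$ to strictly finer covers of $b$ — which is precisely the FJN95 minimality condition. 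Once this correspondence is in hand, the biconditional follows at once.
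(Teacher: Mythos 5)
Your proposal is correct and takes essentially the same route as the paper: the paper offers no separate argument, presenting the lemma as a direct reformulation of Lemma~2.31 of \cite{FJN95}, which is exactly the definitional unfolding you describe. Your dictionary identifying the paper's implicational minimal covers with the lattice-theoretic minimal nontrivial join covers is precisely the (implicit) content of that reformulation.
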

 In particular, for every implication $X\rightarrow b$ in the D-basis, we have $X\subseteq bD=\{x\in \Ji L:bDx\}$.

\section{Recovery of the $D$-basis from the table}\label{D-bas-table}

The goal of this section is to prove the main result of the paper.
\begin{thm} Given a reduced table $(U',A',R')$, $R'\subseteq U'\times A'$, one can polynomially (in the size of the table) reduce the problem of recovery of the $D$-basis to (parallel) solution of the hypergraph dualization problem  formed for each $b \in A'$.
\end{thm}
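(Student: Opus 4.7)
The plan is to separate the $D$-basis into its \emph{binary part} (implications $x\rto b$ with a single-element premise) and its \emph{non-binary part} (implications $X\rto b$ with $|X|>1$), and to reduce the recovery of the non-binary part, for each fixed $b\in A'$, to a single instance of the hypergraph dualization problem whose input can be constructed in polynomial time from the table.

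The binary part is immediate from the reduced table: $b\leq x$ in $L_{R'}$ iff every object possessing attribute $x$ also possesses $b$, a comparison of columns in time $O(|A'|^2\cdot|U'|)$. This extracts the partial order $(A',\leq)$ and all binary implications $x\rto b$ of the $D$-basis.

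For the non-binary part, fix $b\in A'$. A set $X\subseteq A'$ is a cover of $b$, i.e.\ satisfies $X\rto b$, iff $b\leq \bigjoin X$ in $L_{R'}$, iff every meet irreducible $u$ above $\bigjoin X$ is above $b$. Translated into the table this becomes: for every $u\in U'$ with $b\not\leq u$ there is some $x\in X$ with $x\not\leq u$. Equivalently, $X$ is a transversal of the hypergraph
\[
H_b \;=\; \{\, E_u : u\in U',\ b\not\leq u\,\}, \qquad E_u \;=\; \{\, a\in A' : a\not\leq u\,\}.
\]
I would then restrict the vertex set of $H_b$ to a set $V_b\subseteq A'$ which, by Lemma~\ref{lemma2}, must contain $bD$ and which is chosen so that the minimal transversals of the restricted hypergraph are in bijection with the non-binary $D$-basis premises for $b$. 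Candidates for $V_b$ are identified from the table and the arrow relations of Section~\ref{arr}: any $c\geq b$ is excluded (else $\{c\}$ is already a binary cover), and, more generally, any $c$ admitting a refinement by a strictly smaller set $Y\subseteq A'$ with $\bigjoin Y\leq c$ and $c\not\leq \bigjoin Y$ is excluded as well. Feeding the restricted $H_b$ into the dualization algorithm then produces the minimal non-binary premises $X\rto b$ of the $D$-basis.

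The main obstacle I anticipate is establishing the correspondence between the minimal transversals of the restricted $H_b$ and the non-binary $D$-basis implications $X\rto b$. One direction is straightforward: any $D$-basis premise $X$ lies inside $V_b\subseteq bD$, is a cover (hence a transversal), and cannot be shrunk inside $V_b$ without losing the cover property, so it is a minimal transversal. The harder direction is ruling out minimal transversals that are not $D$-basis premises, which forces $V_b$ to capture $bD$ \emph{exactly}, so that any further push-down refinement of a transversal by replacing some $x\in X$ with a strictly smaller set $Y$ is precluded already at the vertex level. Once this correspondence is verified, the overall reduction is polynomial in the table size: one builds $V_b$ and $H_b$ for each $b\in A'$ in polynomial time and solves the $|A'|$ instances of dualization independently, enabling parallelization.
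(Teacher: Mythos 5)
Your core reduction is the same as the paper's: for fixed $b$, the covers of $b$ are exactly the transversals of the hypergraph with edges $E_u=\{a: a\not\leq u\}$ for $u$ not above $b$ (the paper phrases this with the complements of $M_m=bD\cap[0,m]$ over the maximal such elements $m\in M(b)$, which yields the same minimal transversals), and restricting the vertex set to $bD$, computed from the arrow relations via Lemma 11.10 of \cite{FJN95}, is precisely what the paper does. Up to that point your proposal is sound.

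The genuine gap is in the step you yourself flag as the ``main obstacle'': you want to choose $V_b$ so that the minimal transversals of the restricted hypergraph are \emph{in bijection} with the non-binary $D$-basis premises for $b$, and you propose to achieve this by excluding individual vertices $c$ that admit a refinement. This cannot work, and the paper's own example shows why. There $bD=\{a_1,a_2,c_1,c_2\}$ and the minimal transversals are $\{a_1,c_2\}$, $\{a_2,c_1\}$ and $\{a_1,a_2\}$; the last is not a $D$-basis premise (because $a_1$ can be refined to $c_1$ inside it), yet both $a_1$ and $a_2$ must remain in the vertex set because each occurs in a genuine $D$-basis premise. Whether $x\in X$ admits a $D$-refinement depends on the whole premise $X$, not on $x$ alone ($a_1$ is refinable in $\{a_1,a_2\}$ but not in $\{a_1,c_2\}$, since $\{c_1,c_2\}$ is not a cover of $b$), so no vertex-level exclusion rule can produce the exact set of $D$-basis premises. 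The paper does not attempt this: its theorem (and proof) only delivers, for each $b$, the set of all minimal covers contained in $bD$ --- a subset of the canonical direct unit basis that \emph{contains} the $D$-basis implications $X\rightarrow b$, possibly properly --- and the superfluous implications are removed in a separate post-processing phase using the algorithm of \cite{ANR11}. If you drop the demand for an exact bijection and settle for this superset (which is all the theorem as stated requires), your argument goes through; as written, the harder direction you identify is not merely hard but false for any choice of $V_b$.
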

\begin{proof}
Recall from section \ref{arr} that set $A'$ can be interpreted as the set of join irreducible elements $\Ji L$ of the Galois lattice $L$.
The key observation for the recovery of the $D$-basis is contained in \cite[Lemma 11.10]{FJN95} that connects the arrow relations of the reduced table with the $D$-relation on the set of join irreducible elements of its Galois lattice:
\[
 bDc     \text{      iff      }      b\uparrow p \text{  and  } c\downarrow p, \text{ for some } p\in \Mi L_R'.
\]
This allows us to recover effectively the sets $bD=\{c\in A':bDc\}$, for every $b\in \Ji L$.

According to Lemma \ref{lemma2} in section \ref{connect}, for every fixed $b$, every implication of the $D$-basis of the form  $X\rightarrow b$ will satisfy 
$X\subseteq bD=\{x\in \Ji L:bDx\}$. 
	
Another important subset associated with each $b$ is $M(b)=\{m\in \Mi L:b\uparrow m\}$. Recall that these are the maximal elements in $L$  which are not greater than $b$. The following statement is a simple lattice theoretical observation. 

\begin{claim}
For every $b\in \Ji L$ and $Y\subseteq \Ji L$,  $Y\rightarrow b$ holds in $L$ iff  for every $m\in M(b)$ there exists $y\in Y$ such that $y\not \leq m$. 
\end{claim}
	
This observation reduces the problem of finding the $D$-basis to the problem of finding the minimal transversal sets of a particular Sperner hypergraph.

Let first consider the general setting of an optimization problem which has a standard reduction to the hypergraph dualization problem. Given any set $X$ and family $\mathcal{M} = \{M_1, \dots M_k\} \subseteq 2^X$ of its subsets, which, we may assume, are pairwise incomparable, consider an order ideal $\mathcal{J}$ generated by this family. In other words, $\mathcal{J} = \{Y\subseteq X: Y\subseteq M_i, \text{ for some } i\leq k\}$. Apparently, the family of subsets $2^X\setminus \mathcal J = \{Z \subseteq X: Z \notin \mathcal{J}\}$ forms an order filter $\mathcal{F}$ in the poset $2^X$. A well-known optimization problem asks to find all the minimal
elements of $\mathcal{F}$, i.e., sets $Y_1,\dots, Y_s$ such that $\mathcal{F}=\{Z\subseteq X: Y_i\subseteq Z \text{ for some } i\leq s\}$.

The standard reduction to the hypergraph dualization is done by defining the hypergraph $(X, H)$, where $H=\{M_1^c,\dots , M_k^c\}$, $M_i^c = X\setminus M_i$. Apparently, any transversal $T$ of this hypergraph has at least one element from $M_i^c$ for each $i\leq k$, so that $T$ does not belong to the ideal $\mathcal{J}$ generated by $M_1,\dots, M_k$. Therefore, it belongs to $2^X \setminus \mathcal{J} = \mathcal{F}$. Vice versa, any element $Y\in \mathcal{F}$ cannot be a subset of any $M_i$, $i\leq k$, hence, $Y\cap M_i^c\not = \emptyset$. Therefore, $Y$ must be a transversal of the hypergraph $(X, H)$. In the conclusion, finding the minimal elements of $\mathcal{F}$ is equivalent of finding the minimal transversals of $(X, H)$.

We will proceed by setting an instance of optimization problem above, for each $b \in A'$.
 Let $X=bD=\{c\in A': bDc\}$. Consider family of subsets $\mathcal{M} = \{M_m=bD\cap [0,m]: m\in M(b)\}$.
Here $0$ stands for the smallest element of the lattice $L$ and $[0,m]=\{ x\in  L:x\leq m\}$. For each $m$, this is equivalent to taking only those elements from $bD$ that are in the relation $R'$ with $m$. 

Let $\mathcal{J}$ be an order ideal in $2^X=2^{bD}$ generated by a family $\mathcal{M}$. 
According to the Claim, we need to find subsets $Y\subseteq bD$ that do not belong to $\mathcal{J}$. Thus, finding the minimal such sets $Y$ is equivalent to an instance of the optimization problem we discussed above.

Note that, in comparison with the method of \cite{RDB11}, we build a hypergraph on $bD$ rather than whole $A'$, which gives the reduction of the hypergraph size.

 The hypergraph dualization problem would search for the minimal elements of $\mathcal{F}=2^{bD}\setminus \mathcal{J}$, where $\mathcal{J}$ is the order ideal in  $2^{bD}$ generated by the family $\mathcal{M}$.

If $Y_1,Y_2\dots ,Y_s$ are such minimal elements, then the set of implications  $Y_i\rightarrow b$, $i=1,\dots k$, gives a set of implications satisfying two properties:
\begin{itemize}
\item
$Y_i$  is a minimal subset X such that $X\rightarrow b$ holds;
\item
$Y_i  \subseteq bD$.
\end{itemize}

In particular, all these implications belong to the canonical unit basis, and every implication from the $D$-basis of the form $X\rightarrow b$ is included into this list. The full $D$-basis will be recovered, when this process is applied to all $bD\not = \emptyset$, $b\in A'$. 

As a result, the problem to recover the $D$-basis is polynomially reduced to at most $t$ runs of the dualization algorithm on $2^{X_i}$, $X_i\subseteq A'$, where $t=|A'|$ is the number of the attributes.
It is possible, however, that some of the recovered implications are not in the $D$-basis, so the recovered set of implications may contain the $D$-basis properly. 
\end{proof}

\section{Example}\label{Exm}

Let us consider the procedure on a small table with 6 objects and 7 attributes. All symbols in the table other than 1 should be first interpreted as 0s.

\begin{table} [htb]
\centering
\begin{tabular}{|c|c|c|c|c|c|c|c|}
\hline
   & $b$ & $a_1$ & $a_2$ & $c_1$ & $c_2$ & $u$ & $v$  
\rule[-1ex]{0ex}{3.5ex} \\
\hline
1 & $\ua$ & 1 & $\ua$ & 1 & $\uda$ & 1 & 0 
\rule[-1ex]{0ex}{3.5ex} \\
\hline
2 & 1 & $\uda$ & $\uda$ & 1 & 1 & 1 & 0 
\rule[-1ex]{0ex}{3.5ex} \\
\hline
3 & $\ua$ & $\ua$ & 1 & $\uda$ & 1 & 0 & 0
\rule[-1ex]{0ex}{3.5ex} \\
\hline
4 & $\uda$ & $\da$ & $\da$ & 1 & 1 & 1 & 0
\rule[-1ex]{0ex}{3.5ex} \\
\hline
5 & 0 & 0 & 0 & 1& 1 & 1 & 0
\rule[-1ex]{0ex}{3.5ex} \\
\hline
6 & 1 & 1 & 1 & 1 & 1 & 1 & 1
\rule[-1ex]{0ex}{3.5ex} \\
\hline
\end{tabular}
\end{table}

Here $U=\{1,2,3,4,5,6\}$, $A=\{b,a_1,a_2,c_1,c_2,u,v\}$. Since $S(c_1 )=S(u)$, attribute $u$ can be reduced. Attribute $v$ also can be reduced, because $S(S(v))=A$, and $A\setminus v$ is not a first component of any concept. One can also reduce object 5 due to $S(5)=S(4)$, and reduce object 6 because $6\in S(S(i))$, for every $i\in \{1,2,3,4,5\}$. Thus, one can consider the reduced table with $U'=\{1,2,3,4\}$, $A'=\{b,a_1,a_2,c_1,c_2 \}$, and complement the basis (on $A$) of a reduced table by implications $c_1\rightarrow u$, $u\rightarrow c_1$, $v\rightarrow A\setminus v$.

	The order relation on $A'$  consists of $c_1\leq a_1, b$ and $c_2\leq a_2, b$. The order relation on $U'$ consists of $2\geq 4$. This allows us to determine the arrow relations between elements of $A'$ and $U'$, which are shown in the table. The Galois lattice $L_R'$ of the reduced table in shown on Fig.~1. Note that objects of the reduced table correspond to the following (meet irreducible) elements of the lattice: 
$1=a_1$, $2=b$, $3=a_2$ and $4=c_1\vee c_2$.

\begin{figure}[ht]\label{L4}
\includegraphics[height=2.0in,width=6.0in]{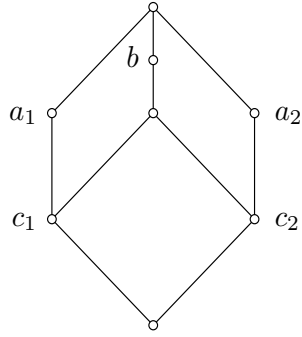}
\caption{Galois lattice from the table}
\end{figure}

In order to recover all implications $X\rightarrow b$, for some $X\subseteq A'$, first identify elements of $bD=\{a_1,a_2,c_1,c_2\}$. Also, $M(b)=\{1,3,4\}$. Taking any $m\in M(b)$, we find corresponding subsets of family $\mathcal{M}$: $M_1=\{a_1,c_1 \}$, $M_3=\{a_2,c_2 \}$, $M_4=\{c_1,c_2 \}$. This is done by picking the support of each element $m\in M(b)$ within $bD$. The hypergraph dualization problem finds the minimal elements in $2^{bD}\setminus J$, where $J$ is the order ideal in  $2^{bD}$ generated by the family $\mathcal{M}=\{M_1,M_3,M_4\}$. Evidently, these will be $Y_1=\{a_1,c_2\}, Y_2=\{a_2,c_1\}$, and $Y_3=\{a_1,a_2\}$. This gives all implications from the basis with the conclusion $b$: $a_1 c_2\rightarrow b$, $a_2 c_1\rightarrow b$ and $a_1 a_2\rightarrow b$.
	
We note that the last implication is not in the $D$-basis and one can use the algorithm from \cite{ANR11}  to remove it.

We can mention that the reduction of the retrieved basis can be considerable, and the size of the reduced part may depend on the number of \emph{binary} implications in the basis, i.e., implications of the form $x \rto y$.
For example, one of the tests on a small matrix of size $10\times 22$ found 9 binary implications in the basis, and the $D$-basis had a total of 635 implications, after 212 implications were reduced in the last phase of the algorithm. No reduction will occur if the binary part of the basis is empty: in this rare case the canonical direct basis and $D$-basis coincide. 

\section{Results of initial testing}\label{test}

The algorithm of $D$-basis recovery described in section \ref{D-bas-table} was implemented in C++ code by the team of undergraduate students of Yeshiva University. They were able to implement the call to an existing subroutine that performs the hypergraph dualization, and which is publicly available via repository maintained by T. Uno: http://research.nii.ac.jp/$\sim$uno/dualization.html 

Further optimizations of the code were done in collaboration with T. Uno and U. Norbisrath. The retrieval of the full implicational basis for a random matrix $20\times 40$ and density 0.2 (20\% of entries of the matrix are ones) took 0.27 sec, for 1616 implications. A similar test in U. Russel et al.~\cite{RDB11}  mentioned in section \ref{FCA}, showed 0.9 sec for 1476 implications.

Further tests with the new code were done on larger random matrices. On a table of size 50-by-100, the $D$-basis with 49,000 implications was obtained in 3 min 30 sec. All the implications $X\rightarrow b$, for a requested column $b$ of a randomly generated matrix of size 50-by-200, were obtained in 25 min.

For the initial comparisons with \emph{Apriori} algorithm, we ran two types of tests. 

The first data set was taken from the Frequent Itemset Mining Dataset Repository, publicly available at http://fimi.ua.ac.be/data/retail.dat. It was retail market basket data from an anonymous Belgian retail store. We took first 90 rows converting them to a binary matrix format with size 90-by-502 and  (low) density 0.0162. Running time was about 42 sec resulting in 104 mostly binary implications of maximal support 5 and confidence = 1. For the \emph{Apriori}, we used Microsoft SQL Server Business Intelligence Development Studio, 2008 (Data Mining Technique - Microsoft Association Rules) \cite{JK06, ZM05}. The result of this run, together with setup of the input, took about 4 min 30 sec, with a considerably larger set of association rules, most of which have confidence $< 1$. 
We point that Apriori was designed specifically for mining association rules in retail data, and the specifically for data that is presented by large sets of item-sets. When converted to matrix form, it usually contains large number of rows (transactions) with comparably few columns (number of items at sale), and normally has low density.


Our second data set was of a different nature, where we believe our novel approach may have an edge over \emph{Apriori}.

 The data-set was kindly provided by the research group of Dr.~G. Okimoto from the University of Hawaii Cancer Center. It contained the gene expression levels for 550 pre-selected genes, in 22 patients, some healthy and others with liver cancer. There are multiple approaches how to convert this matrix into binary format. For comparison of $D$-basis algorithm with Apriori, one of them was taken as test data of size 22-by-1112 and density 0.4684. Columns 1111 and 1112 represent the attributes of being healthy and having cancer, correspondingly. 

Microsoft Association Rules output was restricted to only 2000 frequent sets with minimum support 15 and confidence = 1, and there were no association rules with attributes 1111 and 1112.
On the other hand, with the $D$-basis code we were able to reveal the equivalence of attribute 1111 to the set of 9 other attributes, and to find 14819 implications with the target attribute 1112, whose support = 8 and the confidence at least 21/22.

\section{Further theoretical work and implementation}\label{future}

Recovery of the implicational basis directly relates to knowledge discovery, especially if one is concerned with the targeted set of implications in the basis. Additional testing of biological data provided by the collaborators at the University of Hawaii Cancer Center and medical group in Astana, Kazakhstan, were aiming at data of a larger size, and at the retrieval of specific targeted association rules. The results of this testing were recently presented at the FCA conference \cite{ANO15}.

From a theoretical point of view, the future plans include the generalization of the algorithm for the retrieval of association rules with the threshold of confidence $\mu <1$ (see section \ref{Asso-rul}), and here some initial results are achieved, which may be implemented in the code. This will allow us to take into account the possibility of incomplete and erroneous data. On the other hand, a different approach exists which allows us to utilize the existing implementation. Namely, the algorithm can be run multiple times on the proper subsets of existing sets of rows. For example, if the original data contains 100 rows, then the algorithm can be run on all subsets of 95 rows, which will result in obtaining all association rules of confidence at least $95\%$. This can be achieved easily with the assistance of a specialist in distributed computing.

As far as the goal of the current paper, theoretical results of the $D$-basis recovery are supported by the practical evidence tested in \cite{RDB11} and initial testing presented in section \ref{test}, that the recovery of the implicational basis based on the hypergraph dualization algorithm will bring a considerable cut in run-times when dealing with tabled data. This will make the new approach critical for analysis of large data sets.

\emph{Acknowledgments.} The results of this paper were prompted by the discussion of the hypergraph dualization algorithm with E. Boros and V. Gurvich, at the RUTCOR seminar during the first author's visit in 2011. The test results of section \ref{test} were possible due to code implementation done by J. Blumenkopf and T. Moldwin (Yeshiva College, New York), assistance of T. Uno (National Institute of Informatics, Tokyo), U. Norbisrath and A. Amanbekkyzy (Nazarbayev University, Astana), and data provided by G. Okimoto (University of Hawaii, Honolulu).

\end{document}